\newtheorem*{thm}  {Theorem}
\newtheorem*{lem}  {Lemma}
\newtheorem*{cor}  {Corollary}
\theoremstyle{definition}
\newtheorem*{dfn}  {Definition}
\newtheorem*{rmk}  {Remark}
\newtheorem*{prob} {Problem}
\DeclarePairedDelimiterX\set[1]\lbrace\rbrace{\,\def\given{\delimsize:}#1\,} 
\DeclareMathOperator{\wt}{wt}
\newcommand{\ZZ}{\mathbb{Z}}
\newcommand{\RR}{\mathbb{R}}
\renewcommand{\vec}[1]{#1}
\newcommand{\code}[1]{#1}
\newcommand{\e}[1]{e\left(\frac{j#1}{m}\right)}
\newcommand{\me}[1]{e\left(-\frac{j#1}{m}\right)}
\begin{document}

\title[Weight enumerator of linear-congruence codes]
      {An explicit formula for a\\ weight enumerator of linear-congruence codes}
\author[T.~Sakurai]{\href{https://orcid.org/0000-0003-0608-1852}{Taro Sakurai}}
\address{Department of Mathematics and Informatics, Graduate School of Science, Chiba University, 1-33, Yayoi-cho, Inage-ku, Chiba-shi, Chiba, 263-8522 Japan}
\email{tsakurai@math.s.chiba-u.ac.jp}

\keywords{%
    weight enumerator, %
    code size, %
    linear-congruence code, %
    exponential sum%
}
\subjclass[2010]{%
    \href{https://zbmath.org/classification/?q=94B60}{94B60}
    (\href{https://zbmath.org/classification/?q=05A15}{05A15},
    \href{https://zbmath.org/classification/?q=11L15}{11L15})}
\date{\today}
\begin{abstract}
    \noindent 
    An explicit formula for a weight enumerator of linear-congruence codes is provided.
    This extends the work of Bibak and Milenkovic [IEEE ISIT (2018) 431--435] addressing the binary case to the non-binary case.
    Furthermore, the extension simplifies their proof and provides a complete solution to a problem posed by them.
\end{abstract}

\maketitle

\section*{Introduction}
\noindent 
Throughout this article, \( n \) and \( m \) denote positive integers,
\( b \) denotes an integer
and \( \ZZ_q \coloneqq \{0, 1, \dotsc, q-1\} \subset \ZZ \) for a positive integer \( q \).
We will use \( n \) for a code length, \( m \) for a modulus,
\( b \) for a defining parameter of a code
and \( \ZZ_q \) for a code alphabet.

\begin{dfn}
    Let \( \vec{a} = (a_1, \dotsc, a_n) \in \ZZ^n \) and \( b \in \ZZ \).
    The set \( \code{C} \) of all the solutions
    \( \vec{x} = (x_1, \dotsc, x_n) \in \ZZ_q^n \) for a linear congruence equation
    \begin{equation}
        \label{eq: ax = b}
        \vec{a}\cdot\vec{x} \equiv b \pmod m
    \end{equation}
    is said to be a \emph{linear-congruence code} where \( \vec{a}\cdot\vec{x} \coloneqq a_1x_1 + \dotsb + a_nx_n \).
    A linear-congruence code \( \code{C} \) is called \emph{binary} when \( q = 2 \).
\end{dfn}

Several deletion-correcting codes which have been studied are linear-congruence codes;
    the Varshamov-Tenengol'ts codes~\cite{VT65}, the Levenshtein codes~\cite{Lev66}, the Helberg codes~\cite{HF02}, the Le-Nguyen codes~\cite{LN16}, the construction \( C' \) of Hagiwara~\cite{Hag17} (for some parameters), the consecutively systematic encodable codes and the ternary integer codes in~\cite[Examples~II.1 and II.5]{Hag16} fall into this category (Table).
\begin{table}[pbth]
    \renewcommand\thetable{\!\!} 
    \centering
    \caption{Examples of linear-congruence codes}
        \begin{tabular}{p{4.4cm}rrrp{2.5cm}}
        \toprule
        Linear-congruence code\footnotemark[1] & \( q \) & \( (a_1, \dotsc, a_n) \) & \( m \)         & Constraints                                            \\
        \midrule
        Varshamov-Tenengol'ts code             & \( 2 \) & \( (  1, \dotsc,   n) \) & \( n+1 \)       &                                                        \\
        Levenshtein code                       & \( 2 \) & \( (  1, \dotsc,   n) \) & \( m \)         & \( m \geq n + 1 \)                                     \\
        Helberg code\footnotemark[2]           & \( 2 \) & \( (v_1, \dotsc, v_n) \) & \( v_{n+1} \)   & \( s \in \ZZ_{>0} \)                                   \\
        Le-Nguyen code\footnotemark[3]         & \( q \) & \( (w_1, \dotsc, w_n) \) & \( m \)         & \( m \geq w_{n+1} \), \newline \( s \in \ZZ_{>0} \)    \\
        Construction \( C' \)\footnotemark[4]  & \( 2 \) & \( (c_1, \dotsc, c_n) \) & \( n \)         & \( b \not\equiv 0, n(n+1)/2 \pmod n \)                 \\
        Consecutively systematic \newline
            encodable codes\footnotemark[5]    & \( 2 \) & \( (b_1, \dotsc, b_n) \) & \( 2^{s+1} \)   & \( b = 0 \), \( s \in \ZZ_{>0} \), \newline \( 0 < n - s < 2^{s-1} \)     \\
        Ternary integer code\footnotemark[6]   & \( 3 \) & \( (t_1, \dotsc, t_n) \) & \( 2^{n+1}-1 \) &                                                        \\
        \bottomrule
    \end{tabular}

\end{table}
\footnotetext[1]{The defining parameter \( b \) for the codes in the table takes an arbitrary value unless otherwise stated.}
\footnotetext[2]{The sequence \( (v_i) = (v_i(s)) \) is defined by \( v_i = 0 \) (\( i \leq 0 \)) and \( v_i = 1 + \sum_{j=1}^s v_{i-j} \) (\( i \geq 1 \)).}
\footnotetext[3]{The sequence \( (w_i) = (w_i(q, s)) \) is defined by \( w_i = 0 \) (\( i \leq 0 \)) and \( w_i = 1 + (q - 1)\sum_{j=1}^s w_{i-j} \) (\(i \geq 1 \)).}
\footnotetext[4]{The sequence \( (c_i) = (c_i(n)) \) is defined by \( c_{2i-1} = i \) (\(1 \leq i \leq \lfloor \frac{n+1}{2} \rfloor \)) and \( c_{2i} = n - i + 1 \) (\(1 \leq i \leq \lfloor \frac{n}{2} \rfloor \)).}
\footnotetext[5]{The sequence \( (b_i) = (b_i(s)) \) is defined by \( b_i = 2^{i-1} \) (\(1 \leq i \leq s \)) and \( b_i = 2^{s-1} + i - s \) (\( i > s \)).}
\footnotetext[6]{The sequence \( (t_i) \) is defined by \( t_i = 2^i - 1 \) (\( i \geq 1 \)).}

The following problem concerning the size of a linear-congruence code---the number of solutions for a linear congruence equation \eqref{eq: ax = b}---is posed by Bibak and Milenkovic.
\begin{prob}[Bibak-Milenkovic~\cite{BM18}]
    Give an explicit formula for the size of a linear-congruence code.
\end{prob}
Finding an explicit formula would be a first step toward understanding the asymptotic behavior of the size of a linear-congruence code.
Bibak and Milenkovic provide a solution to the problem for the binary case.
In this article, we provide a complete solution to the problem with a simple proof, which improves the argument of Bibak and Milenkovic.
Actually, what we will show is how the Hamming weights of the solutions for a linear congruence equation distribute.
This immediately gives an expression of the size of a linear-congruence code involving exponential sums---Weyl sums of degree one.

To state the main theorem we need notation which will be standard.
\begin{dfn}
    For a code \( \code{C} \subseteq \ZZ_q^n \), we define a polynomial \( W_\code{C}(z) \) by
    \begin{equation*}
        W_\code{C}(z)
        \coloneqq \sum_{\vec{x} \in \code{C}} z^{\wt(\vec{x})}
        = \sum_{i=0}^n A_i(\code{C}) z^i,
    \end{equation*}
    where \( \wt(\vec{x}) \) denotes the Hamming weight and
    \begin{equation*}
        A_i(\code{C}) \coloneqq
        \lvert\set{ \vec{x} \in \code{C} \given \wt(\vec{x}) = i }\rvert \qquad (0 \leq i \leq n).
    \end{equation*}
    The polynomial \( W_\code{C}(z) \) is said to be the (non-homogeneous) \emph{weight enumerator} of the code \( \code{C} \).
\end{dfn}
Following custom due to Vinogradov in additive number theory, 
\( e(\alpha) \) denotes \( e^{2\pi\alpha\sqrt{-1}} \) for \(\alpha \in \RR \).
Now we are in position to state our main theorem.
\begin{thm}
    Let \( \vec{a} = (a_1, \dotsc, a_n) \in \ZZ^n \) and \( b \in \ZZ \).
    Then the weight enumerator \( W_\code{C}(z) \) of the linear-congruence code
    \begin{equation}
        \label{eq: LCC}
        \code{C} \coloneqq \set{ \vec{x} \in \ZZ_q^n \given \vec{a}\cdot\vec{x} \equiv b \pmod m }
    \end{equation}
    is given by
    \begin{equation}
        W_\code{C}(z) =
        \frac{1}{m}\sum_{j=1}^m \me{b}
            \prod_{i=1}^n\left(1 + z\e{a_i} + \dotsb + z\e{a_i(q-1)}\right).
    \end{equation}
\end{thm}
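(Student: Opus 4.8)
The plan is to detect the defining congruence \eqref{eq: ax = b} by the orthogonality of additive characters and then to exploit the fact that both the Hamming weight and the dot product decompose coordinatewise. First I would rewrite the weight enumerator as a single sum over the whole ambient cube, inserting the indicator of the congruence:
\[
    W_\code{C}(z) = \sum_{\vec{x} \in \ZZ_q^n} z^{\wt(\vec{x})} \cdot [\,\vec{a}\cdot\vec{x} \equiv b \pmod m\,],
\]
where the bracket equals \( 1 \) when the congruence holds and \( 0 \) otherwise. The point of this reformulation is that the constraint defining \( \code{C} \) has been absorbed into a weight, so that the sum now ranges over the \emph{entire} product set \( \ZZ_q^n \) rather than over the code.

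Next I would replace the indicator by the standard orthogonality relation
\[
    \frac{1}{m}\sum_{j=1}^m \e{k} =
    \begin{cases}
        1 & \text{if } m \mid k,\\
        0 & \text{otherwise,}
    \end{cases}
\]
applied with \( k = \vec{a}\cdot\vec{x} - b \). Substituting this, interchanging the two finite sums, and pulling the factor \( \me{b} \) out of the inner sum (it does not depend on \( \vec{x} \)) isolates, for each frequency \( j \), a Weyl sum of degree one over the cube:
\[
    W_\code{C}(z) = \frac{1}{m}\sum_{j=1}^m \me{b} \sum_{\vec{x} \in \ZZ_q^n} z^{\wt(\vec{x})}\, e\!\left(\frac{j\,\vec{a}\cdot\vec{x}}{m}\right).
\]

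The key step is then to factor the inner sum. Because \( \wt(\vec{x}) = \sum_{i=1}^n \wt(x_i) \) and \( \vec{a}\cdot\vec{x} = \sum_{i=1}^n a_i x_i \) both split as sums over coordinates, the summand is a product of \( n \) terms each depending on a single \( x_i \); by the distributive law the sum over the product set \( \ZZ_q^n \) becomes a product of one-dimensional sums,
\[
    \sum_{\vec{x} \in \ZZ_q^n} z^{\wt(\vec{x})}\, e\!\left(\frac{j\,\vec{a}\cdot\vec{x}}{m}\right) = \prod_{i=1}^n \sum_{c=0}^{q-1} z^{\wt(c)}\, \e{a_i c}.
\]
In each factor the symbol \( c = 0 \) contributes \( z^0 \cdot 1 = 1 \) while every nonzero \( c \in \{1, \dotsc, q-1\} \) contributes \( z^1 \cdot \e{a_i c} \), which yields exactly \( 1 + z\e{a_i} + \dotsb + z\e{a_i(q-1)} \) and hence the claimed formula.

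The computation is short, and I expect no genuine obstacle; the one place demanding care is the coordinatewise bookkeeping of the Hamming weight, namely verifying that the generating factor correctly assigns weight \( z^0 \) to the zero symbol and weight \( z^1 \) to each of the \( q-1 \) nonzero symbols. This is precisely what produces the asymmetric shape \( 1 + z(\dotsb) \) of each factor and, ultimately, what distinguishes the non-binary case from the binary one treated by Bibak and Milenkovic.
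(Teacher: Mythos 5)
Your proof is correct and is essentially the paper's own argument run in the opposite direction: the paper starts from the right-hand side, expands the product via the distributive law, and collapses with the orthogonality lemma, whereas you start from \( W_\code{C}(z) \), insert the indicator via the same orthogonality relation, and factor the resulting sum over \( \ZZ_q^n \) into the product. The two computations use identical ingredients (character orthogonality plus coordinatewise decomposition of \( \wt \) and \( \vec{a}\cdot\vec{x} \)), so no substantive difference exists.
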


\begin{cor}
    With the same notation as above, the size of the code \( \code{C} \)
    is given by
    \begin{equation*}
        \lvert\code{C}\rvert =
        \frac{1}{m}\sum_{j=1}^m \me{b}
            \prod_{i=1}^n\left(1 + \e{a_i} + \dotsb + \e{a_i(q-1)}\right).
    \end{equation*}
\end{cor}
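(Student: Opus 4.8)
The plan is to detect the congruence condition defining \( \code{C} \) by means of the orthogonality relation for \( m \)th roots of unity, and then to exploit the fact that both the linear form \( \vec{a}\cdot\vec{x} \) and the Hamming weight \( \wt(\vec{x}) \) decompose coordinatewise. The entire argument rests on the first of these ideas; the rest is bookkeeping.

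First I would rewrite the weight enumerator as a sum over the whole ambient space \( \ZZ_q^n \) rather than over \( \code{C} \), inserting an indicator for membership in \( \code{C} \):
\[
    W_\code{C}(z) = \sum_{\vec{x} \in \ZZ_q^n} \left[\vec{a}\cdot\vec{x} \equiv b \pmod m\right] z^{\wt(\vec{x})}.
\]
The key step is to replace this indicator by the standard identity
\[
    \frac{1}{m}\sum_{j=1}^m e\!\left(\frac{jk}{m}\right) = \begin{cases} 1 & m \mid k, \\ 0 & \text{otherwise,}\end{cases}
\]
applied with \( k = \vec{a}\cdot\vec{x} - b \). This turns the membership condition into an exponential sum; factoring out \( \me{b} \) and interchanging the two finite sums yields
\[
    W_\code{C}(z) = \frac{1}{m}\sum_{j=1}^m \me{b} \sum_{\vec{x}\in\ZZ_q^n} e\!\left(\frac{j(\vec{a}\cdot\vec{x})}{m}\right) z^{\wt(\vec{x})}.
\]

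Next I would evaluate the inner sum. Because \( \vec{a}\cdot\vec{x} = \sum_{i=1}^n a_i x_i \) and \( \wt(\vec{x}) = \sum_{i=1}^n [x_i \neq 0] \), the summand is a product over the \( n \) coordinates, so the sum over \( \ZZ_q^n \) distributes into a product of \( n \) one-dimensional sums:
\[
    \sum_{\vec{x}\in\ZZ_q^n} e\!\left(\frac{j(\vec{a}\cdot\vec{x})}{m}\right) z^{\wt(\vec{x})} = \prod_{i=1}^n \sum_{x_i=0}^{q-1} z^{[x_i\neq 0]}\, e\!\left(\frac{ja_i x_i}{m}\right).
\]
Isolating the \( x_i = 0 \) term (which contributes \( 1 \)) from the terms with \( x_i \in \{1,\dotsc,q-1\} \) (each carrying a factor \( z \)) gives exactly \( 1 + z\e{a_i} + \dotsb + z\e{a_i(q-1)} \), and reassembling the pieces produces the claimed formula. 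The corollary is then immediate: since \( \lvert\code{C}\rvert = \sum_{i=0}^n A_i(\code{C}) = W_\code{C}(1) \), I would simply set \( z = 1 \).

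I do not anticipate a serious obstacle. The one point that deserves care is the factorization step, where one must observe that \( e(j(\vec{a}\cdot\vec{x})/m) \) splits as the product \( \prod_i e(ja_ix_i/m) \) and simultaneously \( z^{\wt(\vec{x})} \) splits as \( \prod_i z^{[x_i\neq 0]} \), so that the full sum over the product set \( \ZZ_q^n \) separates into independent coordinate sums. Everything else follows from the orthogonality relation and a finite interchange of summation.
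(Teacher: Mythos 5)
Your proposal is correct and is essentially the paper's own argument: the same orthogonality lemma for \( m \)th roots of unity, the same coordinatewise factorization of \( z^{\wt(\vec{x})} e\left(\frac{j(\vec{a}\cdot\vec{x})}{m}\right) \), and the same interchange of finite sums, followed by the specialization \( z = 1 \) to get \( \lvert\code{C}\rvert = W_\code{C}(1) \). The only cosmetic difference is that you run the computation from \( W_\code{C}(z) \) toward the exponential-sum expression, whereas the paper starts from that expression and simplifies it to \( W_\code{C}(z) \).
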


\section*{Proof of Theorem}
The only lemma we need to prove the main theorem is the following trivial one.
\begin{lem}
    \begin{equation*}
        \frac{1}{m}\sum_{j=1}^m \e{b}
        = \begin{cases}
            1 & \textup{if \( b     \equiv 0 \pmod m \)} \\
            0 & \textup{if \( b \not\equiv 0 \pmod m \)}.
        \end{cases}
    \end{equation*}
\end{lem}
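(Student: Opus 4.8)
The plan is to argue by cases according to whether \( m \) divides \( b \), treating the sum as a finite geometric series in the root of unity \( e(b/m) \).

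First I would dispose of the case \( b \equiv 0 \pmod m \). Here, for every index \( j \), the quantity \( jb/m \) is an integer, so \( \e{b} = e^{2\pi (jb/m)\sqrt{-1}} = 1 \). Every summand equals \( 1 \), the sum equals \( m \), and dividing by \( m \) yields \( 1 \), as claimed. Next, suppose \( b \not\equiv 0 \pmod m \) and set \( \zeta \coloneqq e\left(\frac{b}{m}\right) \), so that \( \e{b} = \zeta^j \) and the sum becomes the geometric series \( \sum_{j=1}^m \zeta^j \). Two facts drive the computation: first \( \zeta \neq 1 \), since \( e(b/m) = 1 \) would force \( b/m \in \ZZ \), that is \( m \mid b \), contrary to hypothesis; second \( \zeta^m = e(b) = e^{2\pi b \sqrt{-1}} = 1 \), because \( b \in \ZZ \). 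The first fact legitimizes dividing by \( \zeta - 1 \) and the second makes the numerator vanish, giving \( \sum_{j=1}^m \zeta^j = \zeta\,\dfrac{\zeta^m - 1}{\zeta - 1} = 0 \). Dividing by \( m \) gives \( 0 \), completing the case analysis.

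There is essentially no obstacle here, in keeping with the remark that the lemma is trivial; the only point demanding a moment's care is the verification that \( \zeta \neq 1 \) in the second case. This is precisely where the hypothesis \( b \not\equiv 0 \pmod m \) enters, and it is what justifies dividing by \( \zeta - 1 \) when summing the geometric series. I would make sure to record this observation explicitly rather than summing the series blindly, since dividing by \( \zeta - 1 \) is illegitimate exactly in the first case, where the series is instead a sum of ones.
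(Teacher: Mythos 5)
Your proof is correct: the case split on whether \( m \mid b \), the observation that \( \zeta = e(b/m) \neq 1 \) exactly when \( b \not\equiv 0 \pmod m \), and the geometric-series evaluation using \( \zeta^m = 1 \) together give a complete argument. The paper itself offers no proof---it labels the lemma trivial and leaves it to the reader---and your argument is precisely the standard one being taken for granted, so there is nothing to add or compare.
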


\begin{proof}[Proof of Theorem]
    The proof is straightforward:
    \begin{align*}
        &\frac{1}{m}\sum_{j=1}^m \me{b}
            \prod_{i=1}^n\left(1 + z\e{a_i} + \dotsb + z\e{a_i(q-1)}\right) \\
        &\qquad=
        \frac{1}{m}\sum_{j=1}^m \me{b}
            \prod_{i=1}^n \sum_{x_i \in \ZZ_q} z^{\wt(x_i)}\e{a_ix_i} \\
        &\qquad=
        \frac{1}{m}\sum_{j=1}^m \me{b}
            \sum_{(x_1, \dotsc, x_n) \in \ZZ_q^n} \prod_{i=1}^n z^{\wt(x_i)}\e{a_i x_i} \\
        &\qquad=
        \frac{1}{m}\sum_{j=1}^m \me{b}
            \sum_{\vec{x} \in \ZZ_q^n} z^{\wt(\vec{x})}\e{\vec{a}\cdot\vec{x}} \\
        &\qquad=
        \sum_{\vec{x} \in \ZZ_q^n}
            \left(\frac{1}{m}\sum_{j=1}^m
                \e{(\vec{a}\cdot\vec{x} - b)} \right)
                     z^{\wt(\vec{x})} \\
        &\qquad=
        \sum_{\vec{x} \in \code{C}}z^{\wt(\vec{x})} \qquad (\text{By Lemma.}) \\
        &\qquad= W_\code{C}(z).
        \qedhere
    \end{align*}
\end{proof}

\begin{rmk}
    The original proof by Bibak and Milenkovic~\cite{BM18} for the binary case uses a theorem of Lehmer~\cite{Leh13}, which states a linear congruence equation
    \begin{equation*}
        \vec{a}\cdot\vec{x} \equiv b \pmod m
    \end{equation*}
    defined by \( \vec{a} = (a_1, \dotsc, a_n) \in \ZZ^n \) and \( b \in \ZZ \) has a solution \( \vec{x} \in \ZZ_m^n \) if and only if \( \gcd(a_1, \dotsc, a_n, m) \) divides \( b \).
    Consequently, their result is stated depending on whether \( \gcd(a_1, \dotsc, a_n, m) \) divides \( b \) or not.
    By contrast, our result does not refer to \( \gcd(a_1, \dotsc, a_n, m) \) because our proof does not rely on the Lehmer theorem.
\end{rmk}

\section*{Acknowledgments}
The author thanks Professor Manabu Hagiwara for drawing the author's attention to the work of Bibak and Milenkovic and his invaluable help during the preparation of this article.
This work is partially supported by KAKENHI(B) 18H01435, 16K12391 and 16K06336.

\end{document}